\documentclass[11pt]{article}

\usepackage[a4paper,margin=1in]{geometry}
\usepackage[T1]{fontenc}
\usepackage[utf8]{inputenc}
\usepackage{lmodern}
\usepackage{natbib}
\usepackage{graphicx}
\usepackage{amsmath,amssymb,amsfonts,amsthm}
\usepackage{mathrsfs}
\usepackage{booktabs}
\usepackage{multirow}
\usepackage{array}
\usepackage{caption}
\usepackage{enumitem}
\usepackage{xcolor}
\usepackage{url}
\usepackage[colorlinks=true,linkcolor=blue,citecolor=blue,urlcolor=blue]{hyperref}

\newtheorem{theorem}{Theorem}
\newtheorem{proposition}[theorem]{Proposition}

\newcommand{\sign}{\operatorname{sign}}
\newcommand{\HS}{\mathrm{HS}}

\title{\textbf{When Similarity Is Interaction-Driven: Quantum Kernels for Regime-Sensitive Learning}}

\author{
Hanqiu Peng
\and
Jianlong Lu
\and
Ying Chen\thanks{Corresponding author: matcheny@nus.edu.sg}\\[0.3em]
Centre for Quantitative Finance, Department of Mathematics \\
Risk Management Institute, National University of Singapore, Singapore
}

\date{}

\begin{document}

\maketitle

\begin{abstract}
Similarity in many decision systems is governed not by distance alone but by interactions among variables. In fraud and anomaly detection, small local perturbations can cross interaction-sensitive decision boundaries while leaving ambient distance almost unchanged. Motivated by this setting, we introduce a thin-slab interaction model and an interaction-driven quantum kernel constructed from entangled Pauli-string feature maps. The feature map explicitly encodes sparse high-order block interactions. We show that the resulting fidelity kernel is positive semidefinite, admits an exact block-factorized formulation, and induces a geometry sensitive to changes in interaction regime. Across balanced and imbalanced synthetic experiments spanning third-, fourth-, sixth-, and eighth-order interactions, the proposed kernel consistently outperforms linear, radial basis function, Laplacian, and polynomial kernels, as well as an engineered-interaction linear baseline supplied with the planted block products. On real fraud-detection benchmarks, it achieves the highest mean accuracy and F1 on Credit Card Fraud Detection and ranks second on IEEE-CIS Fraud Detection. These findings show that quantum-kernel performance depends on alignment between feature-map geometry and the underlying predictive structure, rather than on Hilbert-space dimension alone. Because the prescribed block-factorized kernel can also be evaluated exactly on a classical computer, the results establish predictive and representational value rather than computational quantum speedup.
\end{abstract}

\noindent\textbf{Keywords:} quantum kernels; inductive bias; high-order interaction learning; thin-slab geometry; fraud detection; kernel methods

\section{Introduction}
\label{sec:introduction-qk}
In many fraud and anomaly-detection problems, predictive similarity is governed by combinations of contextual variables rather than by distance in the original feature space alone. Two transactions can be close in amount, location, and timing yet belong to different risk regimes because of a small set of interacting conditions. This form of contextual anomalousness is consistent with the broader anomaly-detection literature \citep{chandola2009anomaly,hilal2022financial}. Fraud detection also involves class imbalance, distribution shift, evolving adversarial behavior, and heterogeneous transaction contexts \citep{dalpozzolo2015credit,lucas2020credit}. These features motivate similarity measures that respond to sparse high-order interactions rather than only to coordinate-wise proximity.

Quantum kernels provide a circuit-defined approach to constructing similarity. Rather than measuring proximity only in the original feature space, quantum feature maps can encode structured high-order interactions directly into entangled quantum states, inducing a geometry naturally sensitive to interaction-driven behavior. Motivated by fraud and anomaly detection, this study examines whether interaction-aligned quantum kernels can provide a more informative representation of hidden behavioral regimes than conventional distance-based kernels. Our central hypothesis is simple: \textbf{when predictive structure is governed by interaction-sensitive decision boundaries rather than distance alone, kernels aligned with those interactions can substantially improve the representation of risk and regime similarity.}

Kernel methods derive their predictive power from the geometry induced by their similarity functions. Linear kernels encode coordinate-wise relationships, while radial basis function (RBF) and Laplacian kernels define similarity through input-space distance. Polynomial kernels can represent high-order interactions, but generic polynomial kernels do not specifically privilege the sparse block products relevant here. These inductive biases are effective in many learning problems, but can become poorly aligned when predictive regimes are separated by sparse interaction boundaries rather than large geometric distances. In such settings, nearby observations may belong to entirely different behavioral regimes, whilst distant observations may share the same underlying interaction structure. The central challenge is therefore not simply nonlinear representation, but the construction of a similarity geometry aligned with the true predictive mechanism.

Recent work in both classical and quantum machine learning has increasingly highlighted the importance of inductive-bias alignment \citep{havlicek2019supervised,schuld2019quantum,kubler2021inductive,huang2021power,caro2022generalization}. High-order interactions are particularly difficult because the relevant predictive signal is often sparse and structurally localized. This challenge is well recognized in classical interaction learning, where detecting and selecting interaction terms becomes increasingly difficult as the number of candidate feature combinations grows \citep{sorokina2008detecting,bien2013lasso}. Polynomial kernels can theoretically represent such interactions, but the relevant terms are embedded within combinatorially large and largely unstructured feature expansions. Under finite samples, learning may fail to exploit the relevant terms effectively because they are embedded amongst many irrelevant monomials, making sparse interaction regimes difficult to isolate through generic kernel constructions. Related models such as factorization machines and neural factorization machines address this issue by explicitly parameterizing or learning feature interactions in sparse predictive settings \citep{rendle2010factorization,he2017neural}. This motivates the search for feature maps whose induced geometry is explicitly aligned with interaction-driven predictive structure.

To investigate this question systematically, we introduce a thin-slab interaction framework that separates interaction-driven predictive structure from generic distance-based similarity. Within this framework, an entangled Pauli-string feature map encodes fixed block-wise high-order products. The resulting fidelity kernel is positive semidefinite, admits an exact block-factorized formulation, and depends explicitly on coordinate-level phase differences and block-product discrepancies. Because the block partition is fixed and known, the implementation evaluates the block states separately and avoids constructing a statevector spanning all input features. This is a block-factorized quantum-circuit representation of the prescribed interaction geometry, although the same known-block structure also admits efficient classical evaluation. The objective is therefore feature-map alignment rather than representation dimension alone, consistent with analyses that relate quantum-kernel performance to the target function class and kernel spectrum \citep{kubler2021inductive,huang2021power,caro2022generalization}.

Our work is related to the growing literature on quantum kernel methods and quantum-enhanced feature spaces \cite{havlicek2019supervised, schuld2019quantum}. Prior studies have investigated generalization properties \cite{caro2022generalization}, effective dimension \cite{abbas2021power}, concentration phenomena \cite{thanasilp2024exponential}, and the importance of data structure in quantum machine learning \cite{huang2021power, kubler2021inductive}. Rather than pursuing a universal quantum-advantage claim, our work focuses on a more specific and practically relevant question: under what predictive structures can quantum-induced similarity geometry provide predictive benefits over conventional kernels? In particular, we argue that interaction-driven decision boundaries provide a natural setting in which quantum feature maps may offer substantial representational benefits.

Empirically, we adopt a three-stage evaluation design. First, controlled synthetic experiments test the interaction mechanism at third and fourth order. Second, a higher-order scaling study tests whether its predictive effectiveness persists as input dimension and interaction order increase. Third, two fraud-detection benchmarks assess applicability beyond the controlled synthetic setting using data from the Fraud Dataset Benchmark (FDB) \cite{grover2022fraud}.

Across the synthetic settings, from the controlled third- and fourth-order tasks to sixth- and eighth-order scaling, the proposed quantum kernel consistently outperforms standard linear, RBF, Laplacian, and polynomial kernels. It also outperforms \texttt{InterFea}, a ridge-regularized linear classifier trained on the original features augmented with the planted block products, in every evaluated configuration---by accuracy in balanced tasks and F1 in imbalanced tasks. This shows that the fidelity geometry adds value beyond direct linear access to those products. In a more interaction-sensitive balanced setting, the quantum kernel raises accuracy from approximately $0.50$--$0.52$ for standard kernels to over $0.75$, a relative improvement of more than $45\%$. In imbalanced settings, it raises F1 from approximately $0.20$--$0.28$ to above $0.53$, more than doubling performance in several regimes.

In the scaling study, the quantum kernel achieves the highest mean accuracy and F1 in both the 60-dimensional sixth-order and 80-dimensional eighth-order settings, with $(\mathrm{Acc.},\mathrm{F1})=(0.706,0.412)$ and $(0.695,0.395)$, respectively. Relative to \texttt{InterFea}, the paired mean F1 gaps are $0.032$ and $0.075$. Together with the controlled third- and fourth-order results, these findings establish a persistent advantage across all evaluated orders, with the largest noise-free imbalanced F1 gap at eighth order.

On the real fraud-detection datasets, the method achieves the strongest F1 on Credit Card Fraud Detection, improving mean F1 from $0.348$ for the best competing classical kernel to $0.472$, corresponding to a relative improvement of approximately $36\%$. It also remains competitive on IEEE-CIS Fraud Detection, outperforming most conventional kernels and ranking second to the Laplacian kernel. Together, the controlled synthetic results demonstrate a quantum-induced predictive advantage across progressively more demanding interaction regimes, while the real-data results show that the method remains competitive beyond the controlled synthetic setting.

The main contributions of this paper are as follows:
\begin{itemize}
    \item We introduce a thin-slab interaction model as a controlled abstraction of interaction-sensitive decision boundaries, where Euclidean proximity becomes misaligned with predictive similarity.
    \item We propose an interaction-driven quantum kernel based on entangled Pauli-string feature maps that explicitly encode sparse high-order interactions.
    \item We show theoretically that the resulting fidelity kernel is positive semidefinite, admits an exact block-factorized formulation, and induces a geometry naturally sensitive to interaction-driven behavior.
    \item We quantify the resources required by the block-factorized implementation, including reusable qubits, logical gate counts, circuit depth, statevector storage, and kernel-matrix memory.
    \item We provide systematic numerical evidence across multiple interaction orders, class-balance regimes, and scaling settings, showing consistent predictive gains over conventional distance-based kernels and an explicit known-block interaction baseline.
    \item We evaluate the method on two real fraud-detection benchmarks, where it achieves strong or competitive performance beyond the controlled synthetic setting.
    \item More broadly, we show that the effectiveness of quantum kernels depends critically on inductive-bias alignment between feature-map geometry and the underlying predictive structure, highlighting the importance of interaction-driven representations in quantum machine learning.
\end{itemize}

\paragraph{Scope of the contribution.}
This paper builds on quantum kernels, inductive-bias analysis, and classical high-order interaction learning. Existing work has established that quantum feature maps can induce rich similarity geometries and that their practical value depends on feature-map design rather than Hilbert-space dimension alone. The contribution here is a controlled setting in which the required alignment can be stated and tested directly: the thin-slab construction makes Euclidean proximity a poor proxy for label similarity, and the proposed feature map is designed to match that structure rather than to maximize expressivity. The paper therefore demonstrates a quantum-induced predictive advantage in the evaluated regimes, not universal or computational quantum advantage. Because the disjoint-block kernel studied here is exactly classically evaluable, the advantage established is representational and statistical rather than computational.

The remainder of this paper is structured as follows. Section~\ref{sec:relatedwork-qk} reviews related work on quantum kernels, inductive bias, and high-order interaction learning. Section~\ref{sec:problemsetup-qk} introduces the thin-slab interaction model, and Section~\ref{sec:qkernel-qk} develops the proposed interaction-driven quantum kernel and its theoretical properties. Section~\ref{sec:experimentaldesign-qk} describes the experimental protocol, and Section~\ref{sec:results-qk} reports empirical evaluations on synthetic interaction-learning tasks and real fraud-detection benchmarks. Section~\ref{sec:discussion-qk} discusses implications, limitations, and directions for future research, and Section~\ref{sec:conclusion-qk} concludes.
\section{Related Work}
\label{sec:relatedwork-qk}

\subsection{Quantum kernels and inductive bias}

Quantum kernel methods map classical data into quantum states and define similarity through state overlap \citep{havlicek2019supervised,schuld2019quantum}. Early work focused primarily on the representational richness of quantum feature spaces and the possibility of quantum-enhanced learning through high-dimensional embeddings. Subsequent work has shown, however, that practical performance depends strongly on feature-map design, data structure, and the geometry induced by the resulting kernel, rather than on Hilbert-space dimension alone. Related studies have investigated effective dimension \citep{abbas2021power}, high-dimensional implementation on noisy quantum processors \citep{peters2021machine}, exponential concentration of quantum kernels \citep{thanasilp2024exponential}, and trainable or optimized quantum-kernel constructions \citep{pellowjarman2024hybrid,xu2024trainable}.

Our work follows this feature-map-design perspective. Rather than treating quantum kernels as generic high-dimensional embeddings, we investigate whether quantum feature maps can provide advantages when explicitly aligned with interaction-sensitive predictive structure. In particular, we focus on sparse high-order interaction regimes in which Euclidean proximity becomes a poor proxy for predictive similarity. The proposed Pauli-string feature map is therefore designed not for maximal expressivity, but for structured interaction encoding and interaction-aware similarity geometry. This is consistent with recent work on quantum feature-map optimization and kernel-target alignment, where the goal is to design quantum kernels whose induced similarity geometry better matches the supervised learning task \citep{pellowjarman2024hybrid,xu2024trainable}.

\subsection{High-order interactions and similarity geometry}

High-order interactions are central to many learning problems, including anomaly detection, recommendation systems, genomics, and financial risk modelling. However, such interactions are difficult to identify when embedded implicitly within large nonlinear feature spaces. Linear models fail when predictive structure contains little first-order signal, whilst distance-based kernels smooth according to the geometry of the original features, which may be unrelated to hidden interaction regimes. Polynomial kernels can theoretically represent high-order interactions, but their feature spaces contain all monomials up to a given degree. The number of such monomials grows combinatorially with dimension and interaction order, making the relevant sparse terms difficult to isolate within an unstructured feature space under finite samples.

Classical machine learning has long studied ways to represent and select interaction structure more efficiently. Functional ANOVA and ANOVA kernels provide decompositions of functions or kernels into lower- and higher-order interaction components \citep{wahba1990spline,stitson1999support}. Sparse and hierarchical interaction models aim to select informative interactions while controlling model complexity \citep{bien2013lasso,lim2015learning}. Tree-based interaction-detection methods provide another route for identifying non-additive feature effects \citep{sorokina2008detecting}. In sparse predictive settings, factorization machines and neural factorization machines explicitly parameterize feature interactions without constructing a full unstructured polynomial expansion \citep{rendle2010factorization,he2017neural}.

Our work is conceptually related to this direction, but differs in its use of quantum-induced similarity geometry. Instead of explicitly enumerating interaction features, the proposed quantum kernel encodes structured block-wise interactions directly into entangled feature-map phases. This creates a similarity measure that depends explicitly on hidden interaction regimes rather than Euclidean proximity in the original feature space.

\subsection{Fraud detection and interaction-sensitive learning}

Fraud detection and anomaly monitoring provide natural settings for interaction-sensitive learning because predictive regimes are often determined by subtle behavioural combinations rather than large feature deviations. Fraudulent transactions are frequently designed to remain locally similar to legitimate activity, making Euclidean distance a poor indicator of predictive similarity. Small contextual changes, such as device identity, timing, behavioural history, or transaction sequence, may abruptly alter the underlying risk regime despite minimal geometric change in the raw feature space. These properties make fraud detection particularly challenging under severe class imbalance, heterogeneous behavioural patterns, and evolving adversarial dynamics.

Beyond static transaction-level classification, several studies emphasize behavioural and sequential structure in credit-card fraud detection. For example, sequence-based fraud-detection models formulate the task using transaction histories and recurrent neural networks, showing that temporal context and cardholder behaviour can improve detection beyond isolated transaction features \citep{jurgovsky2018sequence}. Other work has developed realistic modelling strategies for credit-card fraud under delayed labels, concept drift, and non-stationary transaction distributions \citep{dalpozzolo2018credit}. These studies support the view that fraud risk is often context-dependent and behaviour-sensitive, rather than determined by marginal feature deviations alone.

To support reproducible evaluation across fraud and abuse-detection tasks, \citet{grover2022fraud} introduced the Fraud Dataset Benchmark (FDB), which provides standardized datasets, train-test splits, and evaluation protocols. In this work, we use the Credit Card Fraud Detection and IEEE-CIS Fraud Detection benchmarks within the FDB framework. Importantly, these experiments are not intended to claim that real fraud labels follow the synthetic thin-slab interaction mechanism exactly. Instead, they test whether interaction-aligned quantum kernels remain effective when the underlying predictive structure is not controlled to match the proposed inductive bias.

\section{High-Order Interaction Learning under Thin-Slab Geometry}
\label{sec:problemsetup-qk}

We now define the controlled learning problem used to test whether an interaction-aware quantum kernel can outperform standard classical kernels. The construction is intentionally simple: the label is generated by planted block-wise high-order interactions, whereas the sampling distribution places many observations close to the coordinate hyperplanes where those interaction signs can flip. This creates a mismatch between Euclidean proximity and label similarity.

\subsection{Block-structured high-order interaction signal}
\label{subsec:blockmodel-qk}

Let $d$ denote the input dimension, let $B$ be the number of blocks, and let $k$ be the number of coordinates in each block. These quantities satisfy
\begin{equation}
    d = Bk.
    \label{eq:dk_relation-qk}
\end{equation}
For an input vector $x\in[-1,1]^d$, we use $B$ disjoint blocks of size $k$, indexed by $b\in\{1,\dots,B\}$:
\begin{equation}
    x = \bigl(x^{(1)}, x^{(2)}, \dots, x^{(B)}\bigr), 
    \qquad 
    x^{(b)} = (x_{b,1},\dots,x_{b,k})\in[-1,1]^k .
    \label{eq:block_partition-qk}
\end{equation}
For each block $b\in\{1,\dots,B\}$, define a block-level regime bit by the sign of the $k$-way product
\begin{equation}
    r_b(x) = \sign\!\left(\prod_{j=1}^{k} x_{b,j}\right) \in \{+1,-1\}.
    \label{eq:block_regime_bit-qk}
\end{equation}
Let $w_b>0$ denote the fixed weight of block $b$, and let $\tau$ denote a decision threshold. The binary label is generated by the weighted majority vote
\begin{equation}
    y(x) = \sign\!\left(\sum_{b=1}^{B} w_b r_b(x)-\tau\right).
    \label{eq:global_label-qk}
\end{equation}
In balanced experiments we set $\tau=0$ with equal block weights. In imbalanced experiments we keep equal block weights but use a stricter positive-class threshold, so that $y=+1$ only when the block-regime score is sufficiently large. In the reported imbalanced runs we use $\tau=3$ for both $(d,k,B)=(9,3,3)$ and $(20,4,5)$; thus, in the $B=3$ case the positive class occurs only when all three block-regime bits are positive, while in the $B=5$ case it occurs when at least four of the five block-regime bits are positive. If the argument of the sign function is exactly zero, we assign $y=+1$; this convention has negligible effect under the continuous sampling scheme.

This model is explicitly interaction-driven. The relevant information in block $b$ is not contained in any single coordinate $x_{b,j}$, nor in pairwise distances between samples. Instead, the signal is encoded in the sign of the planted $k$-way monomial
\begin{equation}
    m_b(x)=\prod_{j=1}^{k}x_{b,j}.
    \label{eq:block_monomial-qk}
\end{equation}
For $k>2$, this produces a target structure that is invisible to linear models and not directly represented by polynomial kernels whose degree is below $k$.

\subsection{Thin-slab distribution}
\label{subsec:thinslab-qk}

We define the thin-slab distribution explicitly. Let $\epsilon\in(0,1)$ denote the slab thickness. For each block $b$, select one active coordinate
\begin{equation}
    j_b^\star \sim \operatorname{Unif}\{1,\dots,k\},
    \label{eq:active_coordinate-qk}
\end{equation}
and then draw
\begin{equation}
    x_{b,j_b^\star} \sim \operatorname{Unif}([-1,1]),
    \qquad
    x_{b,j} = \epsilon\xi_{b,j},\quad j\neq j_b^\star,
    \qquad
    \xi_{b,j}\overset{\mathrm{i.i.d.}}{\sim}\operatorname{Unif}([-1,1]).
    \label{eq:thin_slab_sampling-qk}
\end{equation}
Here, $\operatorname{Unif}$ denotes the uniform distribution, and the auxiliary variables $\xi_{b,j}$ are independently and identically distributed. When $\epsilon$ is small, most coordinates in each block are close to zero, so each sample lies near several coordinate hyperplanes. The support is a union of thin slabs rather than a full-dimensional uniform cloud.

This sampling scheme amplifies the difference between input-space distance and interaction geometry. Since most inactive coordinates are of order $\epsilon$, a sign flip in one inactive coordinate can reverse the sign of the block product while changing the Euclidean norm by only $O(\epsilon)$. Thus, the label can change sharply across a very small Euclidean perturbation. This is exactly the regime in which a local distance-based kernel may assign high similarity to samples whose block-regime bits differ.

\begin{proposition}[Small Euclidean perturbations can flip block regimes]
\label{prop:small_distance_flip-qk}
Fix a block $b$ and suppose $x,x'\in[-1,1]^d$ differ only in one coordinate $\ell$ of that block. Assume
\begin{equation}
    x_{b,\ell}=\delta,
    \qquad
    x'_{b,\ell}=-\delta,
    \qquad
    \delta>0,
\end{equation}
and $x_{a,j}=x'_{a,j}$ for all $(a,j)\neq(b,\ell)$. If all other coordinates in block $b$ are nonzero, then
\begin{equation}
    r_b(x')=-r_b(x),
\end{equation}
while
\begin{equation}
    \|x-x'\|_2=2\delta.
\end{equation}
In particular, under the thin-slab distribution in~\eqref{eq:thin_slab_sampling-qk}, such regime flips can occur at distance $O(\epsilon)$.
\end{proposition}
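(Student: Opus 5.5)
The proof is a direct computation with the block monomial $m_b$ from~\eqref{eq:block_monomial-qk}. Since $x$ and $x'$ agree everywhere except at coordinate $(b,\ell)$, I would factor the block product as
\begin{equation*}
    m_b(x)=x_{b,\ell}\prod_{j\neq \ell}x_{b,j}=\delta P,
    \qquad
    m_b(x')=-\delta P,
    \qquad
    P:=\prod_{j\neq \ell}x_{b,j}.
\end{equation*}
The hypothesis that the other coordinates of block $b$ are nonzero gives $P\neq 0$. Together with $\delta>0$, this shows that both $m_b(x)$ and $m_b(x')$ are nonzero with opposite signs. Applying~\eqref{eq:block_regime_bit-qk} then gives $r_b(x')=\sign(-\delta P)=-\sign(\delta P)=-r_b(x)$. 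The nonvanishing of $P$ is the one point that needs care, because it is what keeps us away from the zero-argument tie convention.

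For the distance claim, the difference $x-x'$ has a single nonzero entry, equal to $\delta-(-\delta)=2\delta$. Hence $\|x-x'\|_2=2\delta$.

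For the thin-slab statement, I would take a sample $x$ from~\eqref{eq:thin_slab_sampling-qk} and choose $\ell\neq j_b^\star$ to be an inactive coordinate. This needs $k\ge 2$, which holds in the interaction regime of interest. Then $x_{b,\ell}=\epsilon\xi_{b,\ell}$ with $|x_{b,\ell}|\le\epsilon$. Almost surely all coordinates are nonzero, since they are drawn from continuous distributions. Set $\delta=|x_{b,\ell}|$ and let $x'$ be $x$ with that coordinate negated. Because $-\epsilon\xi_{b,\ell}=\epsilon(-\xi_{b,\ell})$ and $-\xi_{b,\ell}\in[-1,1]$, the point $x'$ is again in the support of the thin-slab distribution. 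If $x_{b,\ell}<0$, swap the roles of $x$ and $x'$ to match the sign convention in the statement.

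The first part then shows that $r_b$ flips while $\|x-x'\|_2=2|x_{b,\ell}|\le 2\epsilon=O(\epsilon)$. There is no genuine obstacle here; the only subtleties are the nonzero-product condition and checking that the perturbed point stays in the support.
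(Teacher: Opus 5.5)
Your proposal is correct and follows the paper's argument: negating the single coordinate $x_{b,\ell}$ reverses the sign of the block product, and the difference vector has one nonzero entry equal to $2\delta$. Your extra care is consistent with the paper's proof, which leaves these points implicit: you check that the other coordinates are nonzero to avoid the $\sign(0)$ convention, and you make the thin-slab $O(\epsilon)$ clause explicit by flipping an inactive coordinate, which requires $k\ge 2$.
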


\begin{proof}
The two samples differ only by replacing $x_{b,\ell}$ with $-x_{b,\ell}$. Therefore the block product $\prod_{j=1}^k x_{b,j}$ changes sign while all other coordinates remain unchanged. Since the samples differ in exactly one coordinate by $2\delta$, their Euclidean distance is $2\delta$.
\end{proof}

Proposition~\ref{prop:small_distance_flip-qk} explains why the problem is boundary-sensitive. The boundary of a block regime is the union of coordinate hyperplanes
\begin{equation}
    \bigcup_{j=1}^{k}\{x^{(b)}\in[-1,1]^k:x_{b,j}=0\}.
\end{equation}
The thin-slab distribution concentrates many samples near these hyperplanes, making Euclidean smoothness a poor default inductive bias.

\subsection{Why generic kernels can be misaligned}
\label{subsec:classicalfailure-qk}

For two inputs $x$ and $z$, the linear kernel,
\begin{equation}
    k_{\mathrm{lin}}(x,z)=x^\top z,
\end{equation}
is inadequate because the target depends on block-wise products rather than first-order coordinates. Distance-based kernels such as
\begin{equation}
    k_{\mathrm{RBF}}(x,z)=\exp(-\gamma\|x-z\|_2^2),
    \qquad
    k_{\mathrm{Lap}}(x,z)=\exp(-\gamma\|x-z\|_1),
\end{equation}
are more expressive but still define similarity through input-space distance. Proposition~\ref{prop:small_distance_flip-qk} shows that nearby samples can have different block-regime bits, so these kernels can smooth across interaction boundaries.

Here, the positive parameter $\gamma$ controls the distance scale of the RBF and Laplacian kernels and is tuned separately for each kernel family.

Polynomial kernels provide a more serious baseline because they can represent feature interactions:
\begin{equation}
    k_{\mathrm{poly},p}(x,z)=(\gamma x^\top z+c)^p.
\end{equation}
Here, $p$ is the polynomial degree, $\gamma>0$ is a scale parameter, and $c\geq0$ is an offset; all three quantities are specified or tuned as part of the corresponding baseline.
If $p<k$, the planted degree-$k$ terms are absent. When $c\ne0$ and $p\geq k$, the kernel contains monomials of degrees up to $p$, including the relevant degree-$k$ terms, but embeds them within a much larger expansion. The number of monomials of total degree at most $p$ in $d$ variables is
\begin{equation}
    \binom{d+p}{p}.
    \label{eq:monomial_count-qk}
\end{equation}
When $c=0$, the kernel is homogeneous and contains only monomials of total degree exactly $p$; in that case, the planted terms occur directly only when $p=k$. For the inhomogeneous expansion, only $B=d/k$ of the monomials correspond exactly to the planted block products. For example, when $d=20$ and $p=4$, the polynomial feature space contains $\binom{24}{4}=10{,}626$ monomials up to degree four, while only five are the planted block products. This does not mean that polynomial kernels are theoretically unable to represent the target interaction. Rather, under finite samples the relevant block-product terms are embedded in a large unstructured monomial space, so the sparse signal can be difficult to isolate amongst many irrelevant interaction terms. In our experiments, polynomial kernels of several degrees are evaluated under the same validation protocol as the other baselines, but higher-degree kernels remain sensitive to scaling and conditioning, especially under thin-slab sampling where many coordinates have small magnitude. Thus, polynomial kernels are useful but imperfect baselines for sparse $k$-way interaction learning.

This motivates \texttt{InterFea}, an engineered-interaction linear baseline. We augment the raw feature vector with the planted block products,
\begin{equation}
\Phi_{\mathrm{InterFea}}(x)=
\left(x_1,\dots,x_d,\prod_{j=1}^{k}x_{1,j},\dots,\prod_{j=1}^{k}x_{B,j}\right),
\label{eq:engineered_features-qk}
\end{equation}
which induces the linear kernel
\begin{equation}
    k_{\mathrm{InterFea}}(x,z)
    =
    \Phi_{\mathrm{InterFea}}(x)^\top
    \Phi_{\mathrm{InterFea}}(z).
    \label{eq:interfea_kernel-qk}
\end{equation}
In the experiments, the augmented features are standardized using training-set statistics, and an $L_2$-regularized linear least-squares classifier is fitted directly in this engineered feature space using scikit-learn's \texttt{RidgeClassifier}. Binary labels are encoded as $\{-1,+1\}$, and predictions are determined by the sign of the fitted linear score. This primal implementation avoids explicitly constructing the corresponding Gram matrix while retaining the kernel interpretation in~\eqref{eq:interfea_kernel-qk}. InterFea therefore provides a strong control for determining whether the performance gain comes from encoding the relevant interaction features or from the specifically quantum fidelity geometry.

\section{Quantum Feature Map and Kernel Construction}
\label{sec:qkernel-qk}

\subsection{Design hypothesis}
\label{subsec:designhypothesis-qk}

The proposed feature map is based on the following hypothesis: when the target function depends on sparse high-order block interactions, a kernel that explicitly compares samples through these interactions should provide a better-aligned inductive bias than generic kernels that distribute capacity over many irrelevant directions. In the synthetic model, the relevant interaction variables are the block products $m_b(x)=\prod_{j=1}^k x_{b,j}$. The feature map is therefore designed so that differences in $m_b(x)$ directly affect the quantum phase and, consequently, the fidelity kernel. This design does not imply that the resulting kernel is universally superior. Rather, it predicts an advantage only when the encoded block-product structure is aligned with the data-generating mechanism.

\subsection{Block-wise Pauli-string feature map}

For the mathematical construction, let $d=Bk$ and associate one qubit with each input coordinate. For block $b$, let $Z_{b,j}$ denote the Pauli-$Z$ operator acting on the qubit corresponding to coordinate $x_{b,j}$, let $\alpha_b$ denote the interaction-phase scale for that block, and let $\beta$ denote the shared one-body phase scale. Let $H$ denote the Hadamard gate. We define the $d$-qubit feature state
\begin{equation}
    |\psi(x)\rangle=\bigotimes_{b=1}^{B}|\psi^{(b)}(x^{(b)})\rangle,
    \label{eq:tensor_product_feature_state-qk}
\end{equation}
where the $k$-qubit block state is
\begin{equation}
|\psi^{(b)}(x^{(b)})\rangle
=
\exp\!\left(i\alpha_b\left(\prod_{j=1}^{k}x_{b,j}\right)Z_{b,1}\cdots Z_{b,k}\right)
\left[\prod_{j=1}^{k}\exp(i\beta x_{b,j}Z_{b,j})\right]
H^{\otimes k}|0\rangle^{\otimes k}.
\label{eq:block_feature_state-qk}
\end{equation}
Equivalently, the full feature map may be written as
\begin{equation}
|\psi(x)\rangle=
\left[
\prod_{b=1}^{B}
\exp\!\left(i\alpha_b\left(\prod_{j=1}^{k}x_{b,j}\right)Z_{b,1}\cdots Z_{b,k}\right)
\right]
\left[\prod_{i=1}^{d}\exp(i\beta x_iZ_i)\right]
H^{\otimes d}|0\rangle^{\otimes d}.
\label{eq:full_feature_map-qk}
\end{equation}
All phase gates commute because they are diagonal in the computational basis. The one-body phases preserve coordinate-level information, while the block-wise Pauli-string phase encodes the $k$-way product within each block. In the reported experiments, a common interaction scale $\alpha_b=\alpha$ is shared across all blocks and tuned jointly with $\beta$. Although the full feature state is written mathematically as a $d$-qubit state, its factorization across blocks allows the $B$ block states to be evaluated separately. In simulator-based settings, the same $k$ wires are reused; the $(d,k)=(80,8)$ scaling setting instead evaluates the exactly equivalent closed-form state amplitudes. Numerical-backend details are given in Appendix~\ref{secA1-qk}.

The associated quantum kernel is the fidelity kernel
\begin{equation}
    k_Q(x,x')=|\langle\psi(x)|\psi(x')\rangle|^2.
    \label{eq:q_fidelity_kernel-qk}
\end{equation}

\begin{proposition}[Positive semidefiniteness]
The function $k_Q$ in~\eqref{eq:q_fidelity_kernel-qk} is a positive semidefinite kernel.
\end{proposition}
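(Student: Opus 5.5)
The plan is to realize $k_Q$ as an ordinary inner product in a Hilbert space, which makes positive semidefiniteness immediate. The natural candidate is the density-operator (Hilbert--Schmidt) embedding. For each input $x$, set
\begin{equation}
\rho(x)=|\psi(x)\rangle\langle\psi(x)|,
\end{equation}
an element of the Hilbert space of $2^d\times 2^d$ complex matrices with inner product $\langle A,B\rangle_{\HS}=\operatorname{tr}(A^\dagger B)$. Since $\rho(x)$ is Hermitian, cyclicity of the trace gives
\begin{equation}
\langle\rho(x),\rho(x')\rangle_{\HS}=\operatorname{tr}\bigl(\rho(x)\rho(x')\bigr)=\langle\psi(x)|\psi(x')\rangle\langle\psi(x')|\psi(x)\rangle=|\langle\psi(x)|\psi(x')\rangle|^2=k_Q(x,x').
\end{equation}
This identity is the key step. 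It exhibits $k_Q$ as a Gram kernel for the feature map $x\mapsto\rho(x)$, and it also shows directly that $k_Q$ is real-valued and symmetric.

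Next I verify the defining property. Fix points $x_1,\dots,x_n$ and coefficients $c_1,\dots,c_n\in\mathbb{C}$, and define $A=\sum_i c_i\rho(x_i)$. Expanding by sesquilinearity of the Hilbert--Schmidt inner product gives
\begin{equation}
\sum_{i,j}\overline{c_i}c_j k_Q(x_i,x_j)=\Bigl\langle A,A\Bigr\rangle_{\HS}=\operatorname{tr}(A^\dagger A)\ge 0.
\end{equation}
Restricting to real coefficients yields the usual real-kernel statement. Hence every Gram matrix $[k_Q(x_i,x_j)]$ is positive semidefinite.

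As an alternative route, or as a cross-check, I can use the Schur product theorem. The Gram matrix $G_{ij}=\langle\psi(x_i)|\psi(x_j)\rangle$ is positive semidefinite, and so is its entrywise conjugate $\overline{G}$. Their Hadamard product $G\circ\overline{G}$ has entries $|G_{ij}|^2$, so it is positive semidefinite as well. A further option uses the tensor-product structure in~\eqref{eq:tensor_product_feature_state-qk}: there $k_Q$ factors as a product of block fidelity kernels, and products of PSD kernels are PSD. That factorization is not needed for this proposition, though.

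I do not expect a real obstacle. The only point needing care is the bookkeeping of complex conjugation in the trace identity, namely that $\langle\psi(x')|\psi(x)\rangle=\overline{\langle\psi(x)|\psi(x')\rangle}$. Beyond that, I only need $|\psi(x)\rangle$ to be a well-defined vector for each $x$. This holds because the state is a product of unitaries applied to a fixed state, so it is in fact normalized and $k_Q(x,x)=1$.
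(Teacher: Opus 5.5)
Your proof is correct and is the paper's argument: both realize $k_Q$ as the Hilbert--Schmidt Gram kernel of the embedding $x\mapsto|\psi(x)\rangle\langle\psi(x)|$, which gives $\sum_{i,j}\overline{c_i}c_j\,k_Q(x_i,x_j)=\bigl\|\sum_i c_i\,|\psi(x_i)\rangle\langle\psi(x_i)|\bigr\|_{\HS}^2\ge 0$. Your Schur-product and block-factorization remarks are valid cross-checks, but the proof does not need them.
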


\begin{proof}
For any samples $x_1,\dots,x_N$ and coefficients $c_1,\dots,c_N$,
\begin{align}
\sum_{a,b=1}^{N}c_a^\ast c_b k_Q(x_a,x_b)
&=\sum_{a,b=1}^{N}c_a^\ast c_b |\langle\psi(x_a)|\psi(x_b)\rangle|^2 \\
&=\left\|\sum_{a=1}^{N}c_a |\psi(x_a)\rangle\!\langle\psi(x_a)|\right\|_{\HS}^2\geq 0.
\end{align}
Here, $\|\cdot\|_{\HS}$ denotes the Hilbert--Schmidt norm.
\end{proof}

\subsection{Closed-form block kernel}

Since the block circuit is diagonal after the Hadamard layer, the block state has expansion
\begin{equation}
|\psi^{(b)}(x^{(b)})\rangle
=2^{-k/2}
\sum_{s\in\{\pm 1\}^{k}}
\exp\!\left[
i\beta\sum_{j=1}^{k}x_{b,j}s_j
+i\alpha_b\left(\prod_{j=1}^{k}x_{b,j}\right)\prod_{j=1}^{k}s_j
\right]
|s\rangle,
\label{eq:block_state_expansion-qk}
\end{equation}
where $s_j\in\{\pm1\}$ denotes the $Z$-eigenvalue of qubit $j$. Therefore,
\begin{align}
\langle\psi^{(b)}(x^{(b)})|\psi^{(b)}(x'^{(b)})\rangle
=2^{-k}
\sum_{s\in\{\pm1\}^{k}}
\exp\!\left[
i\beta\sum_{j=1}^{k}(x'_{b,j}-x_{b,j})s_j
+i\alpha_b\{m_b(x')-m_b(x)\}\prod_{j=1}^{k}s_j
\right].
\label{eq:block_overlap_sum-qk}
\end{align}
Let
\begin{equation}
    a_j=\beta(x'_{b,j}-x_{b,j}),
    \qquad
    c_b=\alpha_b(m_b(x')-m_b(x)).
\end{equation}
Using the parity identity separating the two sectors of $\prod_j s_j$, we obtain
\begin{equation}
\langle\psi^{(b)}(x^{(b)})|\psi^{(b)}(x'^{(b)})\rangle
=
\cos(c_b)\prod_{j=1}^{k}\cos(a_j)
+i^{k+1}\sin(c_b)\prod_{j=1}^{k}\sin(a_j).
\label{eq:block_overlap_closed-qk}
\end{equation}
Hence the block contribution to the fidelity kernel is
\begin{equation}
    k_Q^{(b)}(x^{(b)},x'^{(b)})
    =
    \left|
    \cos(c_b)\prod_{j=1}^{k}\cos(a_j)
    +i^{k+1}\sin(c_b)\prod_{j=1}^{k}\sin(a_j)
    \right|^2.
    \label{eq:block_kernel_closed-qk}
\end{equation}
This expression shows explicitly that the kernel depends not only on coordinate-wise differences, but also on block-product discrepancies $m_b(x')-m_b(x)$.

Because the full state factorizes over blocks,
\begin{equation}
    \langle\psi(x)|\psi(x')\rangle
    =\prod_{b=1}^{B}\langle\psi^{(b)}(x^{(b)})|\psi^{(b)}(x'^{(b)})\rangle,
\end{equation}
and therefore
\begin{equation}
    k_Q(x,x')=\prod_{b=1}^{B}k_Q^{(b)}(x^{(b)},x'^{(b)}).
    \label{eq:block_kernel_factorization-qk}
\end{equation}

\subsection{Representation structure and circuit resources}
\label{subsec:representation-efficiency-qk}

The resource comparison below follows the implementations used in the synthetic and real-data experiments. All reported settings use a fixed partition into $B=d/k$ blocks; they do not enumerate all degree-$k$ subsets. For IEEE-CIS, the 60 selected numerical features are ordered by their training-set correlation scores before forming ten blocks of six. Accordingly, the classical \texttt{InterFea} baseline in Eq.~\ref{eq:engineered_features-qk} appends exactly $B$ block products to the original data. Its output dimension is $d+B$, and its construction cost is $O(Bk)=O(d)$ per observation.

For the simulator-based experiments, the PennyLane implementation also exploits the same block factorization. It creates a \texttt{default.qubit} device with only $k$ wires and prepares the $B$ block states separately, so the peak simulated qubit requirement is $k$, rather than $d$. The $(d,k)=(80,8)$ setting uses the exact closed-form numerical backend described in Appendix~\ref{secA1-qk}; the resource counts below nevertheless describe the quantum circuit represented by the same feature map. For one block, that circuit applies $k$ Hadamard gates, optionally $k$ local $R_Z$ gates when $\beta\ne0$, and one $k$-qubit Pauli-$Z$ rotation, where $R_Z$ denotes a single-qubit rotation about the $Z$ axis. A standard controlled-NOT (CNOT) ladder decomposition of the Pauli rotation uses $2(k-1)$ logical CNOTs and one further $R_Z$ gate. Let $N_H$, $N_{R_Z}$, and $N_{\mathrm{CNOT}}$ denote the total numbers of the corresponding logical gates across all $B$ block preparations for one observation. Then
\begin{equation}
    N_H=Bk=d,\qquad
    N_{R_Z}=B+Bk\,\mathbf{1}_{\{\beta\ne0\}},\qquad
    N_{\mathrm{CNOT}}=2B(k-1).
    \label{eq:circuit-resource-count-qk}
\end{equation}
Here, $\mathbf{1}_{\{\beta\ne0\}}$ is the indicator that the one-body phase is active.
With the ladder executed serially within a block, the logical depth of one block-state preparation is $2k$ for $\beta=0$ and $2k+1$ for $\beta\ne0$, before hardware routing and basis-gate translation.

\begin{table}[htbp]
\centering
\caption{Logical circuit resources per observation. Depth refers to one block circuit under a CNOT-ladder decomposition.}
\label{tab:representation-scaling-qk}
\resizebox{\textwidth}{!}{%
\begin{tabular}{lrrrrrrr}
\toprule
Dataset/setting & $B$ & Peak qubits & InterFea dim. & $H$ & $R_Z$: $\beta=0$ ($\ne0$) & CNOTs & Block depth: $\beta=0$ ($\ne0$) \\
\midrule
Synthetic: $d=9,\ k=3$  & 3  & 3 & 12 & 9  & 3 (12)  & 12  & 6 (7) \\
Synthetic: $d=20,\ k=4$ & 5  & 4 & 25 & 20 & 5 (25)  & 30  & 8 (9) \\
Synthetic scaling: $d=60,\ k=6$ & 10 & 6 & 70 & 60 & 10 (70) & 100 & 12 (13) \\
Synthetic scaling: $d=80,\ k=8$ & 10 & 8 & 90 & 80 & 10 (90) & 140 & 16 (17) \\
Credit Card: $d=28,\ k=4$ & 7 & 4 & 35 & 28 & 7 (35) & 42 & 8 (9) \\
IEEE-CIS: $d=60,\ k=6$ & 10 & 6 & 70 & 60 & 10 (70) & 100 & 12 (13) \\
\bottomrule
\end{tabular}
}
\end{table}

In the higher-order scaling study, the block count is fixed at $B=10$. Increasing $k$ from six to eight therefore raises the peak circuit width from six to eight reusable qubits and the logical CNOT count per observation from $100$ to $140$, while the engineered classical representation grows from $70$ to $90$ coordinates. The feature map thus provides a direct circuit realization of the prescribed interaction order without enumerating all candidate degree-$k$ subsets. Under the known fixed partition, however, both the circuit representation and the engineered classical representation remain efficiently evaluable; the comparison is statistical rather than an asymptotic runtime separation.

For statevector-based evaluation, each observation is represented and cached by $B$ block statevectors. The statevector storage therefore scales as $O(B2^k)$ per observation, while one pairwise kernel entry costs $O(B2^k)$ when computed from the cached block states. Because the block overlap also admits the closed form in Eq.~\ref{eq:block_overlap_closed-qk}, the same kernel entry can alternatively be evaluated in $O(Bk)$ time.

The selected quantum configurations for both real-data benchmarks have $\beta\ne0$. Consequently, the per-observation logical gate counts are 28 Hadamard gates, 35 $R_Z$ rotations, and 42 CNOTs for Credit Card Fraud, and 60 Hadamard gates, 70 $R_Z$ rotations, and 100 CNOTs for IEEE-CIS. Since the blocks are evaluated separately, the peak simulated device sizes remain four and six qubits, respectively.

At the dataset level, let $n_{\mathrm{train}}$ and $n_{\mathrm{test}}$ denote the numbers of training and test observations. One complete construction produces
\begin{equation}
    B(n_{\mathrm{train}}+n_{\mathrm{test}})
\end{equation}
block statevectors. Since each statevector contains $2^k$ complex amplitudes, their temporary storage scales as $O\!\left(B2^k(n_{\mathrm{train}}+n_{\mathrm{test}})\right)$. The dense training and test--training kernel matrices require
\begin{equation}
    O\!\left(n_{\mathrm{train}}^2
    +n_{\mathrm{test}}n_{\mathrm{train}}\right)
\end{equation}
storage.

For one complete quantum-kernel construction using the selected hyperparameters, the Credit Card Fraud experiment contains $834$ training and $56{,}962$ test observations. Its temporary complex block-state arrays require approximately 98.8 MiB, while the corresponding float64 training and test--training kernel matrices together require approximately 367.8 MiB. The IEEE-CIS experiment contains $6000$ training and $29{,}527$ test observations, requiring approximately 346.9 MiB for the temporary complex block-state arrays and $1{,}626.3$ MiB for the corresponding float64 kernel matrices. These figures refer to the post-selection construction used for test evaluation. The complete test partitions are not used during model selection; hyperparameter tuning uses only the training-derived subtraining and validation partitions described below.

Although the kernel is defined through fidelities between states prepared by a Pauli-string quantum feature map, its block-factorized structure permits exact classical evaluation in the present setting. The reported gate counts therefore characterize a pre-transpilation logical circuit representation; they are not hardware-executed gate counts and should not be interpreted as evidence of quantum computational speedup.

\section{Experimental Design}
\label{sec:experimentaldesign-qk}

\subsection{Synthetic settings}
\label{subsec:syntheticsettings-qk}

We evaluate two representative block-structured configurations:
\begin{itemize}[leftmargin=1.6em]
    \item Setting 1: $d=9$, $k=3$, $B=3$;
    \item Setting 2: $d=20$, $k=4$, $B=5$.
\end{itemize}
For both settings, we vary slab thickness $\epsilon\in\{0.05,0.15\}$ and training-label flip probability $p_{\mathrm{flip}}\in\{0.00,0.10\}$. Clean labels are generated from the block-regime rule, and label flips are applied to the training labels only. The test labels remain clean, so $p_{\mathrm{flip}}$ measures robustness to noisy supervision rather than contamination of the evaluation target.

We study both balanced and imbalanced label constructions. In the balanced case, labels are defined by the block-score majority rule with threshold $\tau=0$. Since the block-regime bits are symmetric under the synthetic sampling scheme and the number of blocks is odd in both settings, the clean positive-class proportion is theoretically $50\%$.

In the imbalanced case, labels are generated by the threshold rule in~\eqref{eq:global_label-qk} with $\tau=3$, making the positive class rarer. For $(d,k,B)=(9,3,3)$, this threshold requires all three block-regime bits to be positive, and hence the clean positive-class proportion is
\[
    \Pr(y=+1)=2^{-3}=12.5\%.
\]
For $(d,k,B)=(20,4,5)$, the threshold requires at least four of the five block-regime bits to be positive, giving
\[
    \Pr(y=+1)
    =
    \frac{\binom{5}{4}+\binom{5}{5}}{2^5}
    =
    18.75\%.
\]
These proportions refer to the clean labels. Label flips are applied only to the training split, before the subtrain/validation split, while the test labels remain clean. Therefore, when the clean positive-class proportion is $q$ and the flip probability is $p_{\mathrm{flip}}$, the expected noisy positive-class proportion in the training and validation portions is
\[
    q_{\mathrm{noisy}}
    =
    q(1-p_{\mathrm{flip}})+(1-q)p_{\mathrm{flip}}.
\]
For $p_{\mathrm{flip}}=0.10$, this gives an expected positive-class proportion of $50\%$ in the balanced setting, $20\%$ in the imbalanced $B=3$ setting, and $25\%$ in the imbalanced $B=5$ setting.

For every synthetic run, the available training set contains $2000$ observations. Of these, $1600$ are used for subtraining and $400$ for validation during model selection; the selected model is subsequently refitted on all $2000$ observations. Evaluation uses an independent test set of $500$ observations. Each configuration is evaluated over ten random seeds. For each setting and each random seed, hyperparameters are retuned from scratch on the validation split. Balanced experiments are selected by validation accuracy and reported by test accuracy. Imbalanced experiments use class-balanced sample weights during training, are selected by validation F1, and are reported by test F1.

\subsubsection{Higher-order scaling protocol}
\label{subsubsec:scaling-protocol-qk}

The higher-order scaling study evaluates whether the interaction-aligned quantum geometry remains effective as dimension and interaction order increase. It considers two imbalanced synthetic settings,
\begin{equation}
    (d,k,B)=(60,6,10)
    \qquad\text{and}\qquad
    (d,k,B)=(80,8,10).
\end{equation}
Holding $B=10$ fixed isolates the increase in the order of each planted interaction from an increase in the number of planted blocks. Both settings use $\epsilon=0.15$, $p_{\mathrm{flip}}=0$, and the same threshold $\tau=3$. Because the block score must exceed three, at least seven of the ten regime bits must be positive; hence the clean positive-class probability is
\begin{equation}
    \Pr(y=+1)
    =2^{-10}\sum_{j=7}^{10}\binom{10}{j}
    =\frac{176}{1024}
    \approx 17.19\%.
    \label{eq:scaling-positive-rate-qk}
\end{equation}

Each scaling setting uses $2000$ training observations, including $400$ validation observations for model selection, and an independent test set of $500$ observations. Results are reported as mean $\pm$ standard deviation over ten random seeds. All methods use class-balanced sample weights, are selected by validation F1, and are refitted on the full training sample before test evaluation. To match the other imbalanced synthetic experiments, we report test accuracy and F1, with F1 treated as the primary metric because accuracy can be inflated by majority-class predictions.

The baseline set contains linear, RBF, and Laplacian kernels; polynomial kernels of degrees $2$, $4$, $6$, and $8$; and \texttt{InterFea}, which is supplied with the ten planted block products. All method-specific hyperparameters are retuned for the higher-dimensional settings under the common validation protocol described below. The scale-aware search design is documented in Appendix~\ref{secA2-qk}.

\subsection{Baselines and model selection}
\label{subsec:baselines-qk}

We compare the quantum kernel against the following baselines under the same validation protocol:
\begin{itemize}[leftmargin=1.6em]
    \item linear kernel;
    \item RBF kernel;
    \item Laplacian kernel;
    \item polynomial kernels with degrees $2$, $4$, $6$, and $8$;
    \item engineered-interaction baseline, denoted \texttt{InterFea}.
\end{itemize}
The \texttt{InterFea} baseline uses the engineered feature map in~\eqref{eq:engineered_features-qk} and the regularized linear classifier described above. Its regularization parameter is selected on the validation subset. This deliberately strong baseline is given direct access to the planted interaction variables.

All kernel methods use kernel ridge regression with precomputed Gram matrices. For every setting and random seed, kernel and regularization hyperparameters are selected independently on the validation subset. For each polynomial degree, $\gamma$, $c$, and the kernel-ridge regularization parameter are tuned jointly. For the quantum kernel, $\alpha$, $\beta$, and the kernel-ridge regularization parameter are selected under the same validation protocol. The final selected model is then refitted using the full available training set.

\subsection{Real-data benchmarks and evaluation protocol}
\label{subsec:realdatasets-qk}

We also consider two fraud-detection benchmarks: the Credit Card Fraud Detection dataset and the IEEE-CIS Fraud Detection dataset, both used through the Fraud Dataset Benchmark (FDB) \citep{grover2022fraud}. Both are highly imbalanced binary classification tasks. We use the complete test partition returned by FDB for each benchmark and report its positive-class proportion in the corresponding result subsection. This is particularly important because resampling is applied only to the training subset, while the test distribution is never resampled.

For both real-data benchmarks, we report accuracy and F1, with validation F1 used as the tuning objective because the positive class is rare and accuracy alone can be misleading. All real-data experiments follow a split-first evaluation protocol. The train/test split is performed before any resampling, standardization, feature selection, or block construction. The validation subset is drawn only from the training portion. Standardization parameters, feature-selection scores, and correlation-based groupings used to define blocks are computed only from the training portion and then applied to validation and test data. Resampling is applied exclusively to the model-fitting subset. The test distribution is never resampled, so the reported metrics reflect performance under the original class imbalance.

\section{Results}
\label{sec:results-qk}

\subsection{Balanced synthetic setting}
\label{subsec:balancedsynthetic-qk}

Tables~\ref{tab:d9k3_results-qk} and~\ref{tab:d20k4_results-qk} report balanced synthetic results as mean $\pm$ standard deviation over ten random seeds. In the smaller setting $(d,k,B)=(9,3,3)$, the quantum kernel achieves the best mean accuracy in all four balanced configurations, although the engineered-interaction baseline \texttt{InterFea} remains very close. This indicates that once the relevant block products are exposed directly to a classical model, much of the performance gain can already be recovered.

In the larger setting $(d,k,B)=(20,4,5)$, the pattern becomes clearer. Standard classical kernels achieve mean accuracies near the $0.5$ random-classification baseline, whereas both the quantum kernel and \texttt{InterFea} achieve substantially higher accuracy. The quantum kernel is best in all four balanced $d=20$ settings, with modest but consistent gains over \texttt{InterFea}. This supports the interpretation that interaction-aware geometry is the main driver of performance.

\begin{table}[htbp]
\centering
\caption{Balanced synthetic results for $(d,k,B)=(9,3,3)$, reported as mean test accuracy $\pm$ standard deviation over ten seeds. Columns correspond to $(\epsilon,p_{\mathrm{flip}})$.}
\label{tab:d9k3_results-qk}
\small
\setlength{\tabcolsep}{5pt}
\begin{tabular}{lcccc}
\toprule
Model 
& $(0.05,0.00)$
& $(0.05,0.10)$
& $(0.15,0.00)$
& $(0.15,0.10)$ \\
\midrule
Linear 
& $0.506 \pm 0.025$ 
& $0.494 \pm 0.020$ 
& $0.502 \pm 0.023$ 
& $0.504 \pm 0.019$ \\
RBF 
& $0.494 \pm 0.027$ 
& $0.497 \pm 0.032$ 
& $0.608 \pm 0.020$ 
& $0.593 \pm 0.038$ \\
Laplacian 
& $0.508 \pm 0.031$ 
& $0.510 \pm 0.024$ 
& $0.545 \pm 0.035$ 
& $0.538 \pm 0.026$ \\
Poly2 
& $0.490 \pm 0.016$ 
& $0.504 \pm 0.019$ 
& $0.490 \pm 0.016$ 
& $0.504 \pm 0.011$ \\
Poly4 
& $0.673 \pm 0.021$ 
& $0.641 \pm 0.027$ 
& $0.672 \pm 0.018$ 
& $0.644 \pm 0.023$ \\
Poly6 
& $0.593 \pm 0.021$ 
& $0.563 \pm 0.027$ 
& $0.639 \pm 0.018$ 
& $0.619 \pm 0.029$ \\
Poly8 
& $0.562 \pm 0.038$ 
& $0.549 \pm 0.028$ 
& $0.566 \pm 0.043$ 
& $0.550 \pm 0.038$ \\
InterFea 
& $0.765 \pm 0.025$ 
& $0.760 \pm 0.023$ 
& $0.767 \pm 0.022$ 
& $0.763 \pm 0.025$ \\
Quantum 
& $\mathbf{0.772 \pm 0.034}$ 
& $\mathbf{0.766 \pm 0.038}$ 
& $\mathbf{0.772 \pm 0.035}$ 
& $\mathbf{0.769 \pm 0.027}$ \\
\bottomrule
\end{tabular}
\end{table}

\begin{table}[htbp]
\centering
\caption{Balanced synthetic results for $(d,k,B)=(20,4,5)$, reported as mean test accuracy $\pm$ standard deviation over ten seeds. Columns correspond to $(\epsilon,p_{\mathrm{flip}})$.}
\label{tab:d20k4_results-qk}
\small
\setlength{\tabcolsep}{5pt}
\begin{tabular}{lcccc}
\toprule
Model 
& $(0.05,0.00)$
& $(0.05,0.10)$
& $(0.15,0.00)$
& $(0.15,0.10)$ \\
\midrule
Linear 
& $0.520 \pm 0.019$ 
& $0.509 \pm 0.018$ 
& $0.515 \pm 0.015$ 
& $0.513 \pm 0.018$ \\
RBF 
& $0.504 \pm 0.023$ 
& $0.504 \pm 0.007$ 
& $0.512 \pm 0.020$ 
& $0.502 \pm 0.022$ \\
Laplacian 
& $0.502 \pm 0.011$ 
& $0.510 \pm 0.019$ 
& $0.508 \pm 0.012$ 
& $0.505 \pm 0.015$ \\
Poly2 
& $0.512 \pm 0.024$ 
& $0.514 \pm 0.019$ 
& $0.507 \pm 0.018$ 
& $0.506 \pm 0.022$ \\
Poly4 
& $0.516 \pm 0.020$ 
& $0.496 \pm 0.005$ 
& $0.502 \pm 0.014$ 
& $0.487 \pm 0.025$ \\
Poly6 
& $0.492 \pm 0.029$ 
& $0.501 \pm 0.017$ 
& $0.499 \pm 0.020$ 
& $0.488 \pm 0.026$ \\
Poly8 
& $0.507 \pm 0.022$ 
& $0.494 \pm 0.021$ 
& $0.514 \pm 0.022$ 
& $0.495 \pm 0.021$ \\
InterFea 
& $0.733 \pm 0.029$ 
& $0.722 \pm 0.022$ 
& $0.734 \pm 0.027$ 
& $0.710 \pm 0.022$ \\
Quantum 
& $\mathbf{0.739 \pm 0.023}$ 
& $\mathbf{0.742 \pm 0.022}$ 
& $\mathbf{0.753 \pm 0.016}$ 
& $\mathbf{0.732 \pm 0.017}$ \\
\bottomrule
\end{tabular}
\end{table}

\subsubsection{Learning-curve analysis}
\label{subsubsec:learning-curves-qk}

To examine sample efficiency, we conduct an additional learning-curve analysis for the balanced $(d,k,B)=(20,4,5)$ setting with $\epsilon=0.15$ and $p_{\mathrm{flip}}=0$. The available training sample size is varied over $N\in\{250,500,1000,2000\}$. For each value of $N$, $20\%$ of the available observations are used for validation and hyperparameter selection, after which the selected model is refitted using all $N$ observations. Within each seed, the training sets are nested and evaluation uses the same fixed test set of 500 observations. The analysis follows the same ten-seed protocol as the corresponding synthetic experiment.

Figure~\ref{fig:learning-curves-d20-qk} shows that the quantum kernel has the highest mean accuracy at every investigated sample size. Its mean test accuracy increases from $0.718$ at $N=250$ to $0.753$ at $N=2000$, while its variability decreases as more training data become available. The engineered-interaction baseline also improves steadily, from $0.666$ to $0.734$, and therefore narrows the gap with the quantum kernel at the largest sample size. Linear, RBF, Laplacian, and polynomial kernels remain near the $0.5$ random-classification baseline throughout the investigated range.

\begin{figure}[htbp]
    \centering
    \includegraphics[width=\textwidth]{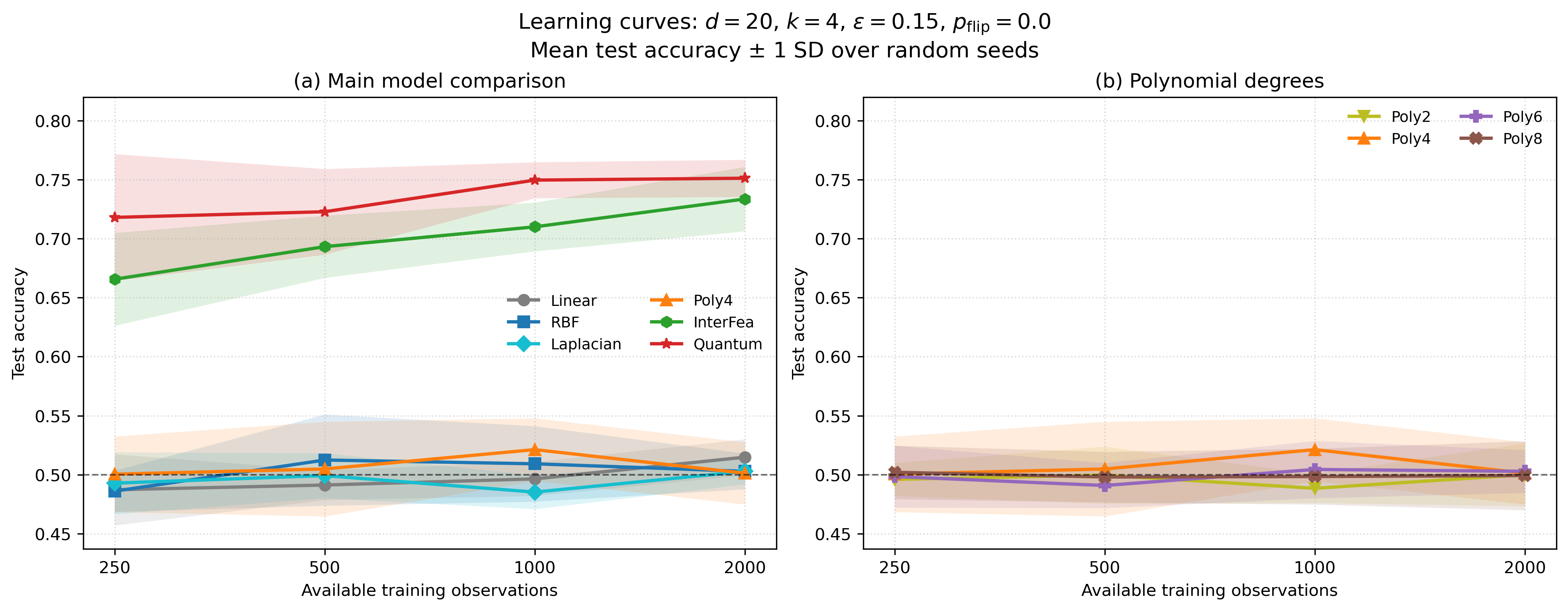}
    \caption{Learning curves for the balanced $(d,k)=(20,4)$ setting. Lines and bands show mean test accuracy and $\pm 1$ standard deviation.}
    \label{fig:learning-curves-d20-qk}
\end{figure}

The degree comparison in Figure~\ref{fig:learning-curves-d20-qk}(b) shows that polynomial kernels of degrees $2$, $4$, $6$, and $8$ remain near the $0.5$ random-classification baseline as the training sample size increases. Although the higher-degree polynomial kernels provide a richer generic interaction space, increasing either polynomial degree or sample size alone does not isolate the sparse planted interaction signal over the investigated range. By contrast, both structure-aligned representations improve with additional data, and the quantum kernel maintains the highest mean accuracy at every sample size. This pattern supports the interpretation that alignment with the planted interaction structure, rather than generic nonlinear capacity, improves sample efficiency.

\subsection{Imbalanced synthetic setting}
\label{subsec:imbalancedsynthetic-qk}

For the imbalanced synthetic experiments, we use test F1 as the main metric, with model selection based on validation F1. Tables~\ref{tab:d9k3_imbalanced_metrics-qk} and~\ref{tab:d20k4_imbalanced_metrics-qk} summarize the results. In both settings, the quantum kernel achieves the best mean F1 in all four configurations, while \texttt{InterFea} remains the closest competitor. This reinforces the conclusion that the gain is tied to interaction-aware similarity rather than generic nonlinear modelling.

\begin{table}[htbp]
\centering
\caption{Imbalanced synthetic results for $(d,k,B)=(9,3,3)$: mean test accuracy and F1 $\pm$ standard deviation over ten seeds. Columns correspond to $(\epsilon,p_{\mathrm{flip}})$. Bold and underlined values denote the best and second-best results.}
\label{tab:d9k3_imbalanced_metrics-qk}
\small
\setlength{\tabcolsep}{4pt}
\resizebox{\textwidth}{!}{%
\begin{tabular}{lcccccccc}
\toprule
\multirow{2}{*}{Model}
& \multicolumn{2}{c}{$(0.05,0.00)$}
& \multicolumn{2}{c}{$(0.05,0.10)$}
& \multicolumn{2}{c}{$(0.15,0.00)$}
& \multicolumn{2}{c}{$(0.15,0.10)$} \\
\cmidrule(lr){2-3}
\cmidrule(lr){4-5}
\cmidrule(lr){6-7}
\cmidrule(lr){8-9}
& Acc. & F1 & Acc. & F1 & Acc. & F1 & Acc. & F1 \\
\midrule
Linear
& $0.506 \pm 0.012$ & $0.200 \pm 0.027$
& $0.497 \pm 0.025$ & $0.194 \pm 0.036$
& $0.499 \pm 0.014$ & $0.197 \pm 0.033$
& $0.493 \pm 0.021$ & $0.191 \pm 0.040$ \\
RBF
& $0.588 \pm 0.032$ & $0.179 \pm 0.041$
& $0.568 \pm 0.019$ & $0.191 \pm 0.032$
& $0.748 \pm 0.050$ & $0.268 \pm 0.065$
& $0.608 \pm 0.079$ & $0.215 \pm 0.039$ \\
Laplacian
& $\mathbf{0.834 \pm 0.024}$ & $0.101 \pm 0.054$
& $\mathbf{0.773 \pm 0.018}$ & $0.161 \pm 0.029$
& $\mathbf{0.848 \pm 0.021}$ & $0.106 \pm 0.040$
& $\mathbf{0.791 \pm 0.016}$ & $0.192 \pm 0.031$ \\
Poly2
& $0.539 \pm 0.028$ & $0.180 \pm 0.037$
& $0.522 \pm 0.015$ & $0.180 \pm 0.026$
& $0.533 \pm 0.024$ & $0.181 \pm 0.039$
& $0.522 \pm 0.021$ & $0.183 \pm 0.031$ \\
Poly4
& $0.739 \pm 0.030$ & $0.320 \pm 0.082$
& $0.660 \pm 0.021$ & $0.290 \pm 0.039$
& $0.752 \pm 0.026$ & $0.351 \pm 0.066$
& $0.654 \pm 0.041$ & $0.277 \pm 0.051$ \\
Poly6
& $0.678 \pm 0.053$ & $0.237 \pm 0.040$
& $0.637 \pm 0.028$ & $0.242 \pm 0.049$
& $0.735 \pm 0.030$ & $0.291 \pm 0.056$
& $0.662 \pm 0.045$ & $0.263 \pm 0.060$ \\
Poly8
& $0.627 \pm 0.052$ & $0.225 \pm 0.044$
& $0.609 \pm 0.053$ & $0.204 \pm 0.048$
& $0.676 \pm 0.032$ & $0.245 \pm 0.046$
& $0.596 \pm 0.010$ & $0.219 \pm 0.026$ \\
InterFea
& $0.801 \pm 0.039$ & $\underline{0.487 \pm 0.067}$
& $0.719 \pm 0.047$ & $\underline{0.438 \pm 0.077}$
& $0.803 \pm 0.036$ & $\underline{0.490 \pm 0.065}$
& $0.722 \pm 0.049$ & $\underline{0.442 \pm 0.079}$ \\
Quantum
& $\underline{0.806 \pm 0.035}$ & $\mathbf{0.494 \pm 0.070}$
& $\underline{0.725 \pm 0.040}$ & $\mathbf{0.446 \pm 0.066}$
& $\underline{0.827 \pm 0.037}$ & $\mathbf{0.544 \pm 0.087}$
& $\underline{0.730 \pm 0.042}$ & $\mathbf{0.449 \pm 0.067}$ \\
\bottomrule
\end{tabular}%
}
\end{table}

\begin{table}[htbp]
\centering
\caption{Imbalanced synthetic results for $(d,k,B)=(20,4,5)$: mean test accuracy and F1 $\pm$ standard deviation over ten seeds. Columns correspond to $(\epsilon,p_{\mathrm{flip}})$. Bold and underlined values denote the best and second-best results.}
\label{tab:d20k4_imbalanced_metrics-qk}
\small
\setlength{\tabcolsep}{4pt}
\resizebox{\textwidth}{!}{%
\begin{tabular}{lcccccccc}
\toprule
\multirow{2}{*}{Model}
& \multicolumn{2}{c}{$(0.05,0.00)$}
& \multicolumn{2}{c}{$(0.05,0.10)$}
& \multicolumn{2}{c}{$(0.15,0.00)$}
& \multicolumn{2}{c}{$(0.15,0.10)$} \\
\cmidrule(lr){2-3}
\cmidrule(lr){4-5}
\cmidrule(lr){6-7}
\cmidrule(lr){8-9}
& Acc. & F1 & Acc. & F1 & Acc. & F1 & Acc. & F1 \\
\midrule
Linear
& $0.502 \pm 0.038$ & $0.281 \pm 0.044$
& $0.514 \pm 0.026$ & $0.296 \pm 0.032$
& $0.506 \pm 0.036$ & $0.284 \pm 0.050$
& $0.511 \pm 0.028$ & $0.289 \pm 0.041$ \\
RBF
& $0.646 \pm 0.023$ & $0.207 \pm 0.033$
& $0.619 \pm 0.012$ & $0.242 \pm 0.023$
& $0.678 \pm 0.030$ & $0.193 \pm 0.040$
& $0.672 \pm 0.030$ & $0.221 \pm 0.047$ \\
Laplacian
& $0.728 \pm 0.023$ & $0.188 \pm 0.024$
& $0.691 \pm 0.019$ & $0.211 \pm 0.059$
& $0.735 \pm 0.018$ & $0.183 \pm 0.058$
& $0.700 \pm 0.016$ & $0.205 \pm 0.048$ \\
Poly2
& $0.573 \pm 0.019$ & $0.273 \pm 0.034$
& $0.542 \pm 0.019$ & $0.267 \pm 0.019$
& $0.568 \pm 0.026$ & $0.258 \pm 0.031$
& $0.547 \pm 0.006$ & $0.271 \pm 0.019$ \\
Poly4
& $0.592 \pm 0.026$ & $0.237 \pm 0.039$
& $0.581 \pm 0.034$ & $0.233 \pm 0.029$
& $0.618 \pm 0.021$ & $0.221 \pm 0.041$
& $0.586 \pm 0.028$ & $0.229 \pm 0.035$ \\
Poly6
& $0.587 \pm 0.036$ & $0.249 \pm 0.013$
& $0.591 \pm 0.034$ & $0.231 \pm 0.018$
& $0.649 \pm 0.033$ & $0.197 \pm 0.040$
& $0.597 \pm 0.028$ & $0.215 \pm 0.054$ \\
Poly8
& $0.602 \pm 0.052$ & $0.241 \pm 0.032$
& $0.582 \pm 0.046$ & $0.252 \pm 0.037$
& $0.640 \pm 0.025$ & $0.212 \pm 0.051$
& $0.629 \pm 0.011$ & $0.225 \pm 0.054$ \\
InterFea
& $\underline{0.756 \pm 0.012}$ & $\underline{0.508 \pm 0.030}$
& $\underline{0.714 \pm 0.018}$ & $\underline{0.493 \pm 0.025}$
& $\underline{0.753 \pm 0.017}$ & $\underline{0.505 \pm 0.036}$
& $\underline{0.710 \pm 0.022}$ & $\underline{0.491 \pm 0.026}$ \\
Quantum
& $\mathbf{0.762 \pm 0.016}$ & $\mathbf{0.525 \pm 0.034}$
& $\mathbf{0.727 \pm 0.020}$ & $\mathbf{0.514 \pm 0.031}$
& $\mathbf{0.770 \pm 0.016}$ & $\mathbf{0.534 \pm 0.029}$
& $\mathbf{0.732 \pm 0.017}$ & $\mathbf{0.516 \pm 0.025}$ \\
\bottomrule
\end{tabular}%
}
\end{table}

The training-label flip experiments test resilience to noisy supervision. At $p_{\mathrm{flip}}=0.10$, the quantum kernel retains the highest mean accuracy in every balanced configuration and the highest mean F1 in every imbalanced configuration. Balanced accuracy changes only slightly, from a decrease of $0.021$ to an increase of $0.003$, whereas imbalanced F1 decreases by $0.011$--$0.095$, reflecting the greater sensitivity of minority-class learning to label corruption. These results demonstrate comparative resilience to $10\%$ training-label noise within the evaluated settings, rather than general robustness to arbitrary or quantum-device noise.

\subsubsection{Scaling to higher interaction order and dimension}
\label{subsubsec:scaling-results-qk}

Table~\ref{tab:scaling-k6-k8-qk} presents the higher-order scaling results. The quantum kernel achieves the highest mean accuracy and F1 in both settings, so its advantage is not confined to either overall classification performance or minority-class recovery. At $(d,k,B)=(60,6,10)$, it obtains an accuracy of $0.706$ and an F1 of $0.412$. The corresponding \texttt{InterFea} values are $0.672$ and $0.379$, while the strongest generic classical results across the two metrics are an accuracy of $0.651$ and an F1 of $0.247$. At $(d,k,B)=(80,8,10)$, the quantum kernel retains an accuracy of $0.695$ and an F1 of $0.395$, compared with $0.631$ and $0.320$ for \texttt{InterFea}; the strongest generic classical results are $0.661$ in accuracy and $0.256$ in F1. Relative to the strongest generic result for each metric, the quantum kernel improves mean accuracy by approximately $8\%$ and $5\%$, and mean F1 by approximately $67\%$ and $54\%$, at $k=6$ and $k=8$, respectively.

\begin{table}[htbp]
\centering
\caption{Higher-order imbalanced scaling results at $\epsilon=0.15$ and $p_{\mathrm{flip}}=0$: mean test accuracy and F1 $\pm$ standard deviation over ten seeds. Bold and underlined values denote the best and second-best results within each column.}
\label{tab:scaling-k6-k8-qk}
\small
\setlength{\tabcolsep}{5pt}
\begin{tabular}{lcccc}
\toprule
\multirow{2}{*}{Model}
& \multicolumn{2}{c}{$(d,k,B)=(60,6,10)$}
& \multicolumn{2}{c}{$(d,k,B)=(80,8,10)$} \\
\cmidrule(lr){2-3}\cmidrule(lr){4-5}
& Acc. & F1 & Acc. & F1 \\
\midrule
Linear
& $0.494 \pm 0.024$ & $0.247 \pm 0.017$
& $0.508 \pm 0.020$ & $0.252 \pm 0.043$ \\
RBF
& $0.597 \pm 0.046$ & $0.225 \pm 0.034$
& $0.580 \pm 0.058$ & $0.237 \pm 0.047$ \\
Laplacian
& $0.651 \pm 0.054$ & $0.200 \pm 0.043$
& $\underline{0.661 \pm 0.058}$ & $0.224 \pm 0.054$ \\
Poly2
& $0.574 \pm 0.036$ & $0.241 \pm 0.037$
& $0.551 \pm 0.067$ & $0.256 \pm 0.046$ \\
Poly4
& $0.603 \pm 0.079$ & $0.221 \pm 0.048$
& $0.558 \pm 0.054$ & $0.252 \pm 0.046$ \\
Poly6
& $0.607 \pm 0.055$ & $0.231 \pm 0.041$
& $0.584 \pm 0.054$ & $0.242 \pm 0.045$ \\
Poly8
& $0.612 \pm 0.054$ & $0.223 \pm 0.041$
& $0.581 \pm 0.040$ & $0.236 \pm 0.038$ \\
InterFea
& $\underline{0.672 \pm 0.019}$ & $\underline{0.379 \pm 0.048}$
& $0.631 \pm 0.026$ & $\underline{0.320 \pm 0.054}$ \\
Quantum
& $\mathbf{0.706 \pm 0.017}$ & $\mathbf{0.412 \pm 0.056}$
& $\mathbf{0.695 \pm 0.020}$ & $\mathbf{0.395 \pm 0.043}$ \\
\bottomrule
\end{tabular}
\end{table}

The agreement between the two metrics is important: the quantum kernel ranks first in accuracy and F1 in both scaling settings. Nevertheless, F1 remains the primary selection and interpretation metric because the task is imbalanced. The Laplacian kernel illustrates this distinction: it attains relatively high accuracies of $0.651$ and $0.661$ while producing F1 scores of only $0.200$ and $0.224$. Accuracy is therefore reported alongside F1 for consistency and to show that the quantum model's improved minority-class recovery is not obtained at the expense of overall classification performance.

Because all methods use the same random seeds, the comparison with \texttt{InterFea} can be evaluated pairwise. The quantum-minus-\texttt{InterFea} mean F1 difference is $0.032$ at $(d,k)=(60,6)$, with a $95\%$ paired $t$ confidence interval of $[0.011,0.053]$ ($p=0.007$, two-sided). At $(d,k)=(80,8)$, the difference is $0.075$, with an interval of $[0.032,0.119]$ ($p=0.004$). The fidelity kernel therefore adds predictive value beyond exposing the planted products to a linear classifier, with a larger gap at $k=8$.

Increasing polynomial degree does not close the gap. The degree-matched Poly6 and Poly8 kernels achieve mean F1 scores of only $0.231$ and $0.236$ in their respective settings. These results indicate that matching the nominal polynomial degree is insufficient: the generic polynomial kernels do not isolate the ten prescribed block interactions, whereas the proposed feature map concentrates its similarity geometry on those interactions. This comparison supports an inductive-bias explanation rather than an explicit-enumeration or computational-complexity claim.

Together with the controlled third- and fourth-order results, the scaling study establishes a persistent advantage across $k\in\{3,4,6,8\}$. The quantum kernel outperforms \texttt{InterFea} in every balanced and imbalanced synthetic configuration. At $\epsilon=0.15$ and $p_{\mathrm{flip}}=0$, the mean F1 gaps are approximately $0.054$, $0.029$, $0.032$, and $0.075$, respectively. The gap is positive at every order and largest at $k=8$. Thus, the contribution of the Pauli-string fidelity geometry is not order-specific and remains effective as input dimension and interaction order increase.

\subsection{Credit Card Fraud Detection benchmark}
\label{subsec:ccfraud-qk}

We first evaluate the models on the Credit Card Fraud Detection dataset from the Fraud Dataset Benchmark, originally associated with European cardholder transactions \citep{dalpozzolo2015calibrating,grover2022fraud}. The original dataset contains $284{,}807$ credit-card transactions over a two-day period, with $492$ fraudulent transactions and $284{,}315$ non-fraudulent transactions, corresponding to an overall positive-class fraud rate of approximately $0.173\%$. The original features are anonymized for confidentiality; in our experiment, we use the $28$ numerical input features and treat fraud detection as a highly imbalanced binary classification problem.

Following the chronological split used by the Fraud Dataset Benchmark, transactions are ordered by the \texttt{Time} variable. The earliest $80\%$ are assigned to training and the latest $20\%$ to testing. This produces an initial training set of $227{,}845$ observations and a fixed test set of $56{,}962$ observations. The training portion contains $417$ fraudulent transactions, while the test set contains $75$ fraudulent and $56{,}887$ non-fraudulent transactions, corresponding to a test fraud rate of approximately $0.132\%$.

Resampling is applied only after the chronological train--test split. In each run, all $417$ fraudulent training observations are retained and an equal number of non-fraudulent training observations are randomly sampled, producing a balanced model-fitting set of $834$ observations. A validation fraction of $0.2$ is used for hyperparameter selection, giving $667$ subtraining and $167$ validation observations. After model selection, the selected model is refitted on all $834$ resampled training observations and evaluated on the complete chronological test set. The resulting feature dimension is $d=28$, and the quantum kernel uses a block structure with $k=4$ and $B=7$.

The chronological train--test division is fixed across all runs. Ten seeds vary the sampled non-fraudulent training observations and the internal validation partition. The reported standard deviations therefore measure sensitivity to training-side resampling and validation selection rather than variation across different test sets. Since the positive class remains extremely rare in the test distribution, all models are tuned using validation F1.

The results in Table~\ref{tab:ccfraud_main_metrics-qk} show that the quantum kernel achieves the highest mean accuracy and F1 among the evaluated methods, while Poly2 is the second-best method on both metrics. The quantum kernel improves mean F1 from $0.348$ for Poly2 to $0.472$, corresponding to an absolute improvement of $0.124$ and a relative improvement of approximately $36\%$. It also achieves a higher F1 than Poly2 in each of the ten runs. Evaluation on the complete chronological test partition avoids an additional test-set subsampling step and preserves the original temporal test distribution.

\begin{table}[htbp]
\centering
\caption{Credit Card Fraud results on the complete chronological FDB test partition: mean test accuracy and F1 $\pm$ standard deviation over ten training-resampling seeds. Bold and underlined values denote the best and second-best results.}
\label{tab:ccfraud_main_metrics-qk}
\begin{tabular}{lcc}
\toprule
Model & Acc. & F1 \\
\midrule
Linear    & $0.983 \pm 0.007$ & $0.124 \pm 0.050$ \\
RBF       & $0.995 \pm 0.001$ & $0.335 \pm 0.085$ \\
Laplacian & $0.994 \pm 0.000$ & $0.275 \pm 0.013$ \\
Poly2     & $\underline{0.996 \pm 0.000}$ & $\underline{0.348 \pm 0.021}$ \\
Poly4     & $0.983 \pm 0.002$ & $0.119 \pm 0.015$ \\
Poly6     & $0.977 \pm 0.011$ & $0.105 \pm 0.035$ \\
Poly8     & $0.968 \pm 0.011$ & $0.073 \pm 0.022$ \\
InterFea  & $0.991 \pm 0.002$ & $0.197 \pm 0.034$ \\
Quantum   & $\mathbf{0.998 \pm 0.001}$ & $\mathbf{0.472 \pm 0.075}$ \\
\bottomrule
\end{tabular}
\end{table}

\subsection{IEEE-CIS Fraud Detection benchmark}
\label{subsec:ieeecis-qk}

We next evaluate the models on the IEEE-CIS Fraud Detection dataset from the Fraud Dataset Benchmark \citep{ieeecis2019fraud,grover2022fraud}. The original IEEE-CIS dataset is a large-scale transaction-fraud benchmark released through the IEEE-CIS Fraud Detection Kaggle competition and provided by Vesta Corporation. The FDB version uses a standardized chronological train--test split and has a training positive-class ratio of approximately $3.50\%$, indicating a highly imbalanced classification task.

Our experiment uses the numeric-only FDB version of IEEE-CIS. We select $60$ numerical features using correlation-based feature selection and use a block structure with $d=60$, $k=6$, and $B=10$ for the quantum kernel. The feature blocks are constructed by correlation-sorted grouping. We use the complete train and test partitions returned by the FDB loader, without further subsampling of the test set. These partitions contain $561{,}013$ training observations and $29{,}527$ test observations. The test set contains $1{,}169$ fraudulent and $28{,}358$ non-fraudulent transactions, corresponding to a positive-class fraud rate of approximately $3.96\%$.

Resampling is applied only to the training partition. In each run, up to $3000$ fraud samples and the same number of non-fraud samples are used, producing a balanced training subset of size $6000$. A validation fraction of $0.2$ is used within this resampled training set for hyperparameter selection, giving $4800$ subtraining and $1200$ validation observations. After model selection, the selected model is refitted on all $6000$ training observations and evaluated on the fixed $29{,}527$-observation FDB test set. As in the Credit Card Fraud Detection experiment, all models are tuned using validation F1 and evaluated under the original class imbalance. The experiment is repeated over ten random seeds, and we report mean $\pm$ standard deviation.

The results are shown in Table~\ref{tab:ieeecis_main_metrics-qk}. The Laplacian kernel achieves the strongest mean accuracy and F1, with values of $0.831$ and $0.267$, respectively. The quantum kernel obtains the second-best result on both metrics, with accuracy $0.821$ and F1 $0.239$. Evaluation on the complete FDB test partition preserves the original class imbalance and avoids an additional test-set subsampling step. These results suggest that the proposed quantum feature map remains competitive on this large real tabular fraud benchmark, although its inductive bias appears less closely aligned with this dataset than with the controlled synthetic tasks.

\begin{table}[htbp]
\centering
\caption{IEEE-CIS results on the complete FDB test partition: mean test accuracy and F1 $\pm$ standard deviation over ten seeds. Bold and underlined values denote the best and second-best results.}
\label{tab:ieeecis_main_metrics-qk}
\begin{tabular}{lcc}
\toprule
Model & Acc. & F1 \\
\midrule
Linear    & $0.804 \pm 0.005$ & $0.216 \pm 0.004$ \\
RBF       & $0.812 \pm 0.007$ & $0.231 \pm 0.008$ \\
Laplacian & $\mathbf{0.831 \pm 0.008}$ & $\mathbf{0.267 \pm 0.007}$ \\
Poly2     & $0.806 \pm 0.008$ & $0.224 \pm 0.008$ \\
Poly4     & $0.809 \pm 0.007$ & $0.227 \pm 0.004$ \\
Poly6     & $0.815 \pm 0.011$ & $0.229 \pm 0.009$ \\
Poly8     & $0.812 \pm 0.010$ & $0.225 \pm 0.007$ \\
InterFea  & $0.803 \pm 0.005$ & $0.217 \pm 0.006$ \\
Quantum   & $\underline{0.821 \pm 0.006}$ & $\underline{0.239 \pm 0.008}$ \\
\bottomrule
\end{tabular}
\end{table}

\section{Discussion}
\label{sec:discussion-qk}

\subsection{Mechanism of improvement}

The synthetic results indicate that improvement comes primarily from alignment with the planted interaction structure rather than generic nonlinearity. Distance-based kernels remain weak where Euclidean closeness does not imply regime agreement, and degree-matched Poly6 and Poly8 remain substantially below the proposed kernel. \texttt{InterFea} is consistently the strongest classical competitor, confirming that direct access to the block products explains much of the gain. Yet the quantum kernel also outperforms \texttt{InterFea} in every synthetic configuration across $k\in\{3,4,6,8\}$. The product-of-fidelities geometry therefore adds predictive value beyond a linear representation of the planted variables.

\subsection{Quantum-specific contribution}

Across the staged synthetic design, the Pauli-string fidelity kernel outperforms both generic kernels and \texttt{InterFea} at the evaluated third-, fourth-, sixth-, and eighth-order interactions. In the sixth- and eighth-order scaling settings, it retains mean accuracy near $0.70$ and mean F1 near $0.40$, ranking first on both metrics. Degree-matched polynomial kernels remain near $0.23$ in F1, while \texttt{InterFea} falls from $0.379$ to $0.320$. These results provide direct scaling evidence that the quantum-induced similarity remains effective as input dimension and interaction order increase.

The quantum-specific contribution consists of two linked elements. The first is a block-factorized circuit-native realization in which a $k$-qubit Pauli-$Z$ string directly encodes each planted $k$-way product. The second is the nonlinear similarity produced by state fidelity and multiplicative composition across blocks. The persistent advantage over generic kernels shows the value of this quantum-induced inductive bias, while the advantage over \texttt{InterFea} shows that the observed improvement is not exhausted by placing the planted products in a classical linear feature vector. The scaling study therefore provides evidence for a predictively useful quantum-induced representation in the statistical and geometric sense.

This quantum-induced predictive advantage is distinct from computational quantum advantage. Under the known disjoint block partition used in these experiments, the fidelity kernel admits an exact classical formula and can be evaluated efficiently. The evidence therefore concerns predictive performance, interaction-aware geometry, and persistence across the investigated scaling settings rather than classical intractability. Demonstrating computational quantum advantage would require a non-factorizing or otherwise classically hard feature map and an explicit complexity separation, neither of which is assumed here.

\subsection{Real-data interpretation}

The real-data experiments add nuance. On Credit Card Fraud, the quantum kernel achieves the highest mean accuracy and F1 on the fixed chronological test set. On IEEE-CIS, the Laplacian kernel performs best, while the quantum kernel is second. This suggests that the usefulness of the quantum feature map depends on how closely the true predictive structure resembles the interaction geometry encoded by the circuit. Real fraud labels are not generated by the synthetic block-product rule, and their structure may involve temporal, categorical, behavioral, or graph-based effects absent from our numerical feature representation. Thus, the real-data results should be interpreted as evidence of competitiveness beyond the controlled synthetic setting, not as universal dominance.

\subsection{Limitations}

The main limitation of the controlled synthetic study is that the data-generating mechanism is deliberately aligned with the proposed feature map. In real data, the relevant interactions and block structure are not known in advance, and the IEEE-CIS result shows that the kernel is not uniformly dominant. Although the complete chronological Credit Card Fraud test partition contains $56{,}962$ observations, it includes only $75$ fraudulent transactions. Across both real-data benchmarks, the repetitions share fixed test partitions and vary the training-side resampling and internal validation split. Their standard deviations therefore quantify sensitivity to model fitting and selection, but not uncertainty across alternative temporal test periods. In addition, the reported experiments use exact numerical simulation; they do not quantify finite-shot estimation error, device noise, routing overhead, or calibration effects.

The scaling study demonstrates predictive persistence up to 80 dimensions and eighth-order interactions. The block-factorized construction extends to larger $d$ and $k$: direct statevector storage costs $O(B2^k)$ per observation, while the exact overlap in Eq.~\ref{eq:block_overlap_closed-qk} permits $O(Bk)$-time evaluation of each kernel entry. Larger values of $d$ and $k$ are therefore accessible without enumerating all degree-$k$ interactions, although dense kernel storage remains quadratic in sample size. Whether the predictive advantage persists at those scales requires further evaluation.

Future work should prioritize evaluation at larger dimensions and interaction orders, data-driven block discovery, trainable or overlapping Pauli-string feature maps, finite-shot and device-noise analysis, and hardware implementation. Establishing computational quantum advantage would additionally require a non-factorizing or otherwise classically hard feature map, together with explicit complexity and runtime comparisons against strong classical interaction-learning and kernel-approximation methods.

\section{Conclusion}
\label{sec:conclusion-qk}

This study examined quantum-kernel learning through the lens of inductive-bias alignment. We introduced a thin-slab interaction model in which labels are generated by sparse block-wise $k$-way product signs, creating a controlled setting where Euclidean distance and low-order moments are misaligned with the target structure. We then designed a block-factorized quantum feature map whose Pauli-string phases encode the relevant products directly, yielding a fidelity kernel with an explicit interaction-aware geometry.

The experiments show that quantum-kernel performance depends on feature-map--target alignment. From the controlled third- and fourth-order settings to the 60-dimensional sixth-order and 80-dimensional eighth-order settings, the proposed kernel consistently outperforms generic classical kernels and \texttt{InterFea}. Its advantage over \texttt{InterFea} is positive across all evaluated orders and largest at eighth order, showing that the product-of-fidelities geometry contributes beyond the planted products alone. On the fraud benchmarks, the method achieves the strongest F1 on Credit Card Fraud and ranks second on IEEE-CIS.

These findings establish a quantum-induced predictive advantage in the interaction-driven regimes studied here: high-order Pauli-string phases provide a block-factorized circuit-native representation whose induced fidelity geometry remains effective across the investigated increases in interaction order, while the evaluated generic distance and polynomial kernels do not isolate the relevant sparse interactions. The advantage arises from the quantum feature-map construction and induced similarity geometry, but the present factorized kernel is also exactly classically evaluable. The results therefore demonstrate predictive and representational value that persists across the investigated scaling settings, rather than universal superiority or computational quantum advantage.

\paragraph{Data availability.}
The synthetic data are generated programmatically. The real benchmark datasets used in this study are publicly available from their respective sources and through the Fraud Dataset Benchmark.

\paragraph{Code availability.}
Code is available from the authors upon reasonable request.

\paragraph{Author contributions.}
All authors contributed to the study design, analysis, and writing of the manuscript.

\appendix

\section{Numerical kernel implementation}
\label{secA1-qk}

The mathematical feature map, closed-form block overlap, circuit resources, and experimental settings are presented in the main text. This appendix records only the additional numerical details needed to connect that construction to the reported implementation.

\subsection{Numerical backends}

The reference circuit is implemented with PennyLane's \texttt{default.qubit} simulator using $k$ reusable wires and the gate sequence defined in Sec.~\ref{sec:qkernel-qk}. The $d=9$ and $d=20$ synthetic settings, the real-data experiments, and the $(d,k,B)=(60,6,10)$ scaling setting generate their block states through this simulator.

For the $(d,k,B)=(80,8,10)$ experiment only, the block states are evaluated in NumPy from the exact amplitudes in Eq.~\eqref{eq:block_state_expansion-qk}. This change avoids the repeated simulator decomposition of a broadcast eight-qubit Pauli rotation. It changes only the numerical evaluation path, not the feature map or kernel definition. The NumPy statevectors were compared directly with the corresponding PennyLane circuit outputs and were required to agree to a maximum absolute error below $10^{-10}$.

\subsection{Batched Gram-matrix construction}

For a dataset $X\in\mathbb{R}^{n\times d}$, the implementation reshapes the inputs into $B=d/k$ blocks. Let $S_b(X)\in\mathbb{C}^{n\times 2^k}$ contain the block-$b$ statevectors as rows. For two datasets $X$ and $Z$, the block Gram matrix is evaluated by batched matrix multiplication,
\begin{equation}
    K_b(X,Z)
    =\left|S_b(X)S_b(Z)^{\dagger}\right|^{\circ 2},
\end{equation}
where $|\cdot|^{\circ 2}$ denotes elementwise squared modulus. The full kernel matrix is accumulated as
\begin{equation}
    K_Q(X,Z)
    =K_1(X,Z)\circ\cdots\circ K_B(X,Z),
\end{equation}
with $\circ$ denoting the Hadamard product. The diagonal of the training Gram matrix is set to one to enforce exact unit self-fidelity. No kernel centering or positive-semidefinite correction is applied.

\section{Model selection and search design}
\label{secA2-qk}

For each imbalanced scaling run, every method uses the same training--validation partition and the same validation-F1 selection procedure. Each model searches its own method-appropriate hyperparameter grid, the configuration with the highest validation F1 is selected, and that configuration is then refitted on the complete training set before a single evaluation on the independent test set. Thus, ``the same validation-F1 procedure'' refers to the data partition, selection criterion, and refitting protocol; it does not mean that different model families use identical hyperparameters.

The scaling study uses broad multi-scale grids designed for the higher-dimensional settings. Classical kernel bandwidths are adjusted for the change in ambient dimension, ensuring that the $d=60$ and $d=80$ models are evaluated at appropriate distance scales. The quantum interaction-phase range is adjusted for the rapid decrease in the characteristic magnitude of a thin-slab block product as $k$ increases. Regularization and the remaining kernel parameters are tuned jointly over ranges covering multiple scales. These choices give both model families search ranges appropriate to each problem scale. The exact numerical grids are retained in the accompanying experiment code rather than repeated here.

\bibliographystyle{unsrtnat}
\bibliography{reference}

\end{document}